\newcommand{\df}{\textit}
\newcommand{\B}{{B}}
\newcommand{\K}{\mathcal{C}}
\newcommand{\MyOmega}{\mathrm{\Omega}}
\newcommand{\MyTheta}{\mathrm{\Theta}}
\newcommand{\bx}{\mathbf{x}}
\newcommand{\by}{\mathbf{y}}
\newcommand{\bz}{\mathbf{z}}
\newcommand{\bu}{\mathbf{u}}
\newcommand{\bw}{\mathbf{w}}
\newcommand{\bconn}{\{\land, \lor, \oplus\}}
\newcommand{\bl}{B_l}
\newcommand{\no}{\overline}
\DeclareMathOperator{\lca}{lca}
\spnewtheorem*{corollary*}{Corollary}{\upshape\bfseries}{\itshape}
\spnewtheorem*{theorem*}{Theorem}{\upshape\bfseries}{\itshape}
\spnewtheorem*{remark*}{Remark}{\itshape}{\rmfamily}
\begin{document}

\title{Checking Tests for Read-Once Functions\\over Arbitrary Bases}

\author{Dmitry V. Chistikov}
\institute{%
Faculty of Computational Mathematics and Cybernetics\\
Moscow State University, Russia\\
\email{dch@cs.msu.ru}}

\authorrunning{\quad}
\titlerunning{\quad}

\maketitle

\begin{abstract}
A Boolean function is called read-once over a basis $B$
if it can be expressed by a formula over $B$ where no variable
appears more than once.
A checking test for a read-once function $f$ over $B$
depending on all its variables
is a set of input vectors distinguishing $f$ from all
other read-once functions of the same variables.
We show that every read-once function $f$ over $B$ has
a checking test containing $O(n^l)$ vectors, where $n$
is the number of relevant variables of $f$ and $l$ is
the largest arity of functions in $B$.
For some functions, this bound cannot be improved
by more than a constant factor.
The employed technique involves reconstructing $f$ from
its $l$-variable projections and provides a stronger form
of Kuznetsov's classic theorem on read-once representations.

\keywords{read-once Boolean function, checking test, complexity,
teaching dimension, equivalence query, membership query}.
\end{abstract}

\section{Introduction}
\label{s:intro}

Let $\B$ be an arbitrary set of Boolean functions. A function $f$ is called
\df{read-once} over $\B$ iff it can be expressed by a formula over $\B$ where
no variable appears more than once.

Let $f(x_1, \ldots, x_n)$ be a read-once function over $\B$ that
depends on all its variables. Then a set $M$ of $n$-bit vectors
is called a \df{checking test} for $f$ iff for any other read-once function $g(x_1, \ldots, x_n)$
over $\B$ there exists a vector $\alpha \in M$ such that $f(\alpha) \ne g(\alpha)$.
In other words, $M$ is a checking test for $f$ iff the values of $f$ on vectors from $M$ distinguish
$f$ from all other read-once functions $g$ of the same variables. Note that all
these \df{alternatives} $g$, unlike the \df{target function} $f$, may have irrelevant variables.

Denote by $\bl$ the basis of all $l$-variable Boolean functions.
The goal of this paper is to prove that all $n$-variable read-once
functions over $\bl$ have checking tests containing $O(n^l)$ vectors.
More generally, for an arbitrary basis $\B$ and a read-once function $f$ over $\B$, denote
by $T_\B(f)$ the smallest possible number of vectors in a checking test for $f$.
This value can be regarded as the \df{checking test complexity} of $f$. We show that
\begin{equation*}
    T_{\bl}\bigl(f(x_1, \ldots, x_n)\bigr) \le 2^l \cdot \binom{n}{l}
\end{equation*}
and, therefore, for any finite basis $\B$ and any sequence of read-once
functions $f_n$ of $n$ variables over $\B$, it holds that $T_{\B}(f_n) = O(n^l)$
as $n \to \infty$, where $l$ is the largest arity of functions from $\B$.

This result is based on the previously known relevance hypercube method
by Voronenko~\cite{aavpmi23}. Our main contribution is the proof that the method
is \df{correct} for all bases $\bl$ for an arbitrary $l$, i.\,e., it provides
a way to construct checking tests of specified \df{length} (cardinality) for
all read-once functions over these bases. Previous results give proofs only
for~$l \le 5$~\cite{aavmvk11,aavpmi23,vchb5nsk}.

It should be pointed out that the bound $T_{\bl}(f) = O(n^{l + 1})$ can
be extracted from the related paper on the exact identification problem by
Bshouty, Hancock and Hellerstein~\cite{bhhgen}. Our result has the following
advantages. Firstly, for some functions the bound $O(n^l)$ cannot be improved
by more than a constant factor (it matches the known lower bound
$\MyOmega(n^l)$ for $n$-ary disjunction up to a constant factor).
Secondly, checking tests constructed by the relevance hypercube method have
regular structure. In short, we show that every read-once function can
be unambiguously reconstructed from a set of its $l$-variable projections with
certain properties. This fact may look natural at first sight, but turns out
a tricky thing to prove after all.

\section{Background and Related Work}

Our result has some interesting consequences related to computational learning
theory. For instance, it is known that checking tests can be used to implement
\emph{equivalence queries} from Angluin's learning model~\cite{a88}.
For the problem of identifying an unknown read-once function over an arbitrary
finite basis $B$ with queries, it turns out that non-standard \df{subcube
identity queries} can be efficiently used~\cite{dchid}. A subcube identity query basically
asks whether a specified projection of the unknown function $f$ is constant,
i.\,e., whether a given partial assignment of constants to input variables unambiguously
determines the value of $f$.

It follows from our results that for any finite basis $B$,
the problem of learning an unknown read-once function over $B$ can be solved
by an algorithm making $O(n^{l + 2})$ membership and subcube identity queries,
which is polynomial in $n$ (here $l$ is the largest arity of functions in $B$
and a standard \df{membership query} is simply a request for the value of $f$
on a given input vector). This result builds upon an algorithm by Bshouty,
Hancock and Hellerstein~\cite{bhhgen}, which is a strong generalization
of a classic exact identification algorithm by Angluin, Hellerstein and Karpinski~\cite{ahk}.

Closely related to the notion of checking test complexity is the definition
of teaching dimension introduced by Goldman and Kearns~\cite{gkteach}.
A \df{teaching sequence} for a Boolean concept (a Boolean function $f$)
in a known class $\K$ is a sequence of labeled instances (pairs of the form
$\langle \alpha, f(\alpha) \rangle$) consistent with only one function $f$ from $\K$.
The \df{teaching dimension} of a class is the smallest number $t$ such that
all concepts in the class have teaching sequences of length at most $t$.

In test theory, which dates back to 1950s~\cite{chya}, the corresponding definition
is that of the \df{Shannon function for test complexity}, which is the largest
test complexity of an $n$-variable function. In these terms, our Corollary
can be restated as follows: $T_{\bl}(n) = O(n^l)$, where $T_{\bl}(n)$ is
the Shannon function for checking test complexity of read-once Boolean functions
(i.\,e., the maximum of $T_{\bl}(f)$ over all $n$-variable read-once functions over
$\bl$). It must be stressed that in the definition of a checking test used
in this paper, the target function is required to depend on all
its variables. (One needs to test all $2^n$ input vectors to distinguish
the Boolean constant $0$ from all read-once conjunctions of $n$ literals.)

Another appealing problem is that of obtaining bounds on
the value of $T_\B (f)$ for individual read-once functions $f$.
The bound
$T_{\bl}(x_1 \lor \ldots \lor x_n) = \MyTheta(n^l)$
is obtained in~\cite{aavpmi23} and generalized in~\cite{aavdisj}.
In~\cite{dchsub},
it is shown that for a wide class of bases including $\bl$, $l \ge 2$,
there exist pairs of read-once functions $f, f'$ such that $f'$ is obtained
from $f$ by substituting a constant for a variable and $T_\B(f') > T_\B(f)$.
This result shows that lower bounds on $T_\B(f)$ cannot generally be obtained
by simply finding projections of $f$ that are already known to require a large
number of vectors in their checking tests.

For the basis $B_2$, individual bounds on the checking test complexity are
obtained in~\cite{vchindkzn}. In~\cite{zcvaaco}, it is shown that almost
all read-once functions over the basis $\{\lor, \oplus\}$ have a relatively
small checking test complexity of $O(n \log n)$, as compared to the maximum
of $\MyTheta(n^2)$ (even if alternatives are arbitrary read-once functions
over $B_2$ and not necessarily read-once over $\{\lor, \oplus\}$).
For the standard basis $\{\land, \lor, \neg\}$, it is known that
$n + 1 \le T_{\{\land, \lor, \neg\}}(n) \le 2 n + 1$~\cite{dchaouu}, and individual
bounds can be deduced from those for the monotone basis
$\{\land, \lor\}$~\cite{bvcao3,dchctmrof}.

\section{Basic Definitions}
\label{s:def}

A variable $x_i$ of a Boolean function $f(x_1, \ldots, x_n)$ is called
\df{relevant} (or essential) if there exist two $n$-bit vectors $\alpha$ and $\beta$
differing only in the $i$th component such that $f(\alpha) \ne f(\beta)$.
If $x_i$ is relevant to $f$, then $f$ is said to \df{depend} on $x_i$.

In this paper, we call a pair of functions $f(x_1, \ldots, x_n)$
and $g(y_1, \ldots, y_n)$ \df{similar} if for some constants
$\sigma, \sigma_1, \ldots, \sigma_n \in \{0, 1\}$ and for some permutation $\pi$
of $\{1, \ldots, n\}$ the following equality holds:
\begin{equation*}
    f(x_1, \ldots, x_n) \equiv g^{\sigma_{\vphantom{1}}}_{\vphantom{(}}
    \bigl(x_{\pi(1)}^{\sigma_1}, \ldots, x_{\pi(n)}^{\sigma_n}\bigr),
\end{equation*}
where $z^\tau$ stands for $z$ if $\tau = 1$ and for $\no z$ if $\tau = 0$.
A Boolean function $f(x_1, \ldots, x_n)$, $n \ge 3$, is called \df{prime} if
it has no decomposition of the form
\begin{equation*}
    f(x_1, \ldots, x_n) \equiv
    g\!\left(\,
    h(x_{\pi(1)}, \ldots, x_{\pi(k)}),\,
    x_{\pi(k + 1)}, \ldots, x_{\pi(n)}\right),
\end{equation*}
where $1 < k < n$ and $\pi$ is a permutation of $\{1, \ldots, n\}$.

The structure of formulae expressing read-once functions can be represented
by rooted trees. A \df{tree} of a read-once function $f(x_1, \ldots, x_n)$
over $\bl$ has $n$ leaves labeled with literals of different variables and one or more
internal nodes labeled with functions from $\bl$ and symbols $\circ \in \bconn$
of arbitrary arity (possibly exceeding $l$). We assume without loss
of generality that such trees also have the following properties:
\begin{enumerate}
\renewcommand{\labelenumi}{\theenumi)}
\item any internal node is labeled either with a prime function or with
    a symbol $\circ \in \bconn$;
\item internal nodes labeled with identical symbols $\circ \in \bconn$
    are not adjacent.
\end{enumerate}
One can readily see that every read-once function over $\bl$ has at least
one tree of this form.

In the sequel, variables are usually identified with corresponding leaves in the tree.
Denote by $\lca(y_1, \ldots, y_m)$ the \df{least common ancestor} of variables $y_1, \ldots, y_m$,
i.\,e., the last common node in (simple) paths from the root of the tree to $y_1, \ldots, y_m$.

Suppose that $T$ is a tree of a read-once function and $v$ is its internal
node. By $T_v$ we denote the \df{subtree of $T$ rooted at $v$}, i.\,e.,
the rooted tree that has root $v$ and contains all descendants of $v$.
If $w_1, \ldots, w_p$ are \df{children} (direct descendants) of $v$, then
subtrees $T_{w_1}, \ldots, T_{w_p}$ are called \df{subtrees of the node} $v$.
If $x$ is a leaf of $T$ contained in $T_v$, then by $T_v^x$ we denote
a (unique) subtree $T_{w_j}$ containing $x$.
Finally, subtrees of the root node of a tree are called \df{root subtrees}.

\section{The Relevance Hypercube Method}
\label{s:rhc}

This section is devoted to the review of the relevance hypercube method
proposed by Voronenko in~\cite{aavpmi23}. This method has been known to be correct
for the bases $\bl$ if $l \le 5$ (see~\cite{aavpmi23,vchb5nsk}).

From now on, we will use the term ``read-once function'' instead of ``read-once
function over $\bl$''.
We use boldface letters to denote vectors
(often treated as sets) of variables.

Let $f$ be a read-once function depending on variables $\bx = \{x_1, \ldots, x_n\}$.
Suppose that $H$ is a set of $2^l$ input vectors disagreeing at most in
$i_1$th, \ldots, $i_l$th components such that the restriction of $f$ to $H$ (which is an
$l$-variable Boolean function) depends on all its $l$ variables
$\bx' = \{ x_{i_1}, \ldots, x_{i_l} \}$.
Then $H$ is called a \df{relevance hypercube} (or an essentiality hypercube)
of dimension $l$ for these variables $\bx'$.
Any relevance hypercube can be identified with a partial assignment $p$ of constants to input
variables such that the induced projection $f_p$ depends on all its
$l$ variables. Such assignments are called $l$-justifying in~\cite{bhhgen}.

\begin{remark*}
For some functions $f$ and some subsets of their variables relevance
hypercubes do not exist. For instance, one may easily check that the function
$d(x, u_0, u_1) = (\no x \land u_0) \lor (x \land u_1)$
has no relevance hypercubes for the set $\bu = \{ u_0, u_1 \}$.
As indicated below, the absence of relevance hypercubes is a major obstacle to proving
the correctness of the relevance hypercube method (see also~\cite{aavpmi23,vchb5nsk}).
At the same time, for some functions there exist subsets of variables with
more than one relevance hypercube. An example is given by the same
function $d$ and the set $\bu' = \{ x, u_0 \}$.
\end{remark*}

Any set $M$ of input vectors is called a \df{relevance hypercube set}
of dimension $l$ for $f$ if it contains a relevance hypercube $H$ for every
$l$-sized subset of $\bx$, for which such a hypercube exists.
(Recall that $\bx$ is the set of all variables relevant to $f$.)
In other words: consider all $l$-sized subsets $\bw \subseteq \bx$
such that $f$ has a relevance hypercube for $\bw$. A set $M$ is
a relevance hypercube set iff $M$ contains
at least one relevance hypercube for each subset $\bw$ of this kind.
It was conjectured that any such set is
a checking test for $f$.

Suppose that $f$ is a read-once function that depends
on $n$ variables $\bx$. Construct a \df{relevance table} with
$\binom{n}{l}$ rows and two columns by the following rule. First, fill the
first cells of all rows with different $l$-sized subsets of $\bx$.
Then for each row, if the first cell contains a
subset $\bw \subseteq \bx$, put in the second cell any
relevance hypercube for $\bw$ (along with the corresponding values of $f$)
if such a hypercube exists, or the symbol $*$ otherwise.

In~\cite{aavpmi23}, it is shown that any (valid) relevance table uniquely
determines a read-once function in the following sense.
Suppose that one knows that a function $g$ is read-once and agrees with
$f$ on all relevance hypercubes from a relevance
table $E$ for $f$. If one also knows that for each $*$-row in $E$ the function
$g$ does not have a relevance hypercube, then one can recursively reconstruct the \df{skeleton}
of $g$ (which is a tree $T'$ such that negating
some of its nodes' labels yields a correct tree representing $g$).
After that, the values of $f$ on the vectors from $E$ allow one to prove
that $g$ is equal to $f$.

It follows that a relevance hypercube set $M$ of dimension $l$ for
a read-once function $f$ is indeed a checking test for $f$ if $E$ contains
no $*$-rows. If for some $l$-sized subset of variables $\bw$ no relevance
hypercube exists, then a more sophisticated technique is needed to prove that
$f$ can still be reconstructed from its values on vectors from $M$.
The approach used in this paper is outlined in the following Section~\ref{s:preproof}.

\section{Assumptions and Notation}
\label{s:preproof}

In this section we make preliminary assumptions and introduce some notation.
All subsequent work, including the proof of our main theorem, is based on
the material presented here.

We start with an arbitrary read-once function $f$ over $\bl$, where $l \ge 3$.
Let $\bx$ be the set of variables relevant to $f$.
Suppose that $M$ is a relevance hypercube set of dimension $l$ for $f$.
Our goal is to prove that $M$ is a checking test for $f$, i.\,e.,
for any other read-once function $g(\bx)$ there exists
a vector $\alpha \in M$ such that $f(\alpha) \ne g(\alpha)$.

Take any read-once function $g(\bx)$ that agrees with $f(\bx)$ on all vectors
from $M$. Firstly and most importantly, we need to prove that root nodes of
these two functions' trees are labeled with similar functions.
If either of the root nodes is labeled with a symbol $\circ \in \bconn$,
then this can be done with the aid of techniques similar to those from~\cite{aavpmi23}.
Here we focus on the prime case, i.\,e, we assume that
\begin{gather*}
    f(\bx) = f^0(f_1(\bx^1), \ldots, f_s(\bx^s)), \\
    g(\by) = g^0(g_1(\by^1), \ldots, g_r(\by^r)),
\end{gather*}
where both $f^0$ and $g^0$ are prime, and $\bx^1 \cup \ldots \cup \bx^s$ and
$\by^1 \cup \ldots \cup \by^r$ are partitions of $\bx = \by$.
(Technically, we must first assume that $\by \subseteq \bx$, but it is easily
shown that no variable from $\bx$ can be irrelevant to $g$; see, e.\,g.,
Proposition~\ref{p:hcexpand} in the next section.)
Note that here $s \le l$ and $r \le l$.

Suppose we have already proved that $f^0$ and $g^0$ are similar.
As our second step, we need to show that partitions of input variables into
subtrees are identical in the representations above. In other words, we need to
show that each $\by^k$ is equal to some $\bx^i$.

These two steps, especially the first one, constitute the main
difficulties in proving the correctness of the method. The remaining part is
technical and can be done with the aid of induction on the depth of the tree
representing $f$. A short explanation of how this part is done is given at the
end of our main theorem's proof in Section~\ref{s:th}.

In the following sections, we will need the \df{colouring} of input variables $\bx$ defined by the following rule.
To each variable $x \in \bx$ we assign a (unique) colour $k \in \{1, \ldots, r\}$ such that $x \in \by^k$.
This definition provides a convenient way of relating functions $f$ and $g$
(i.\,e., their tree structure) to each other.

\section{Some Observations}

In this section we present three facts needed for the sequel.
A key observation is given by the following proposition.

\begin{proposition}
\label{p:lcag}
    Suppose that $g'$ is a projection of $g$ that depends
    on variables $x$ and $y$ having the same colour $k$.
    Also suppose that $\lca(x, y) = v$ in a tree $T'$ of $g'$.
    Then, if $v$ is labeled with a prime function $h$, it follows
    that all leaves in the subtree $(T')_v$ have the same colour $k$.
    Otherwise, if $v$ is labeled with a symbol $\circ \in \bconn$,
    it follows that all leaves in subtrees $(T')_v^x$ and $(T')_v^y$ have
    the same colour $k$.
\end{proposition}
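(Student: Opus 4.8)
The plan is to trace how the canonical (normalized) tree of the projection $g'$ arises from that of $g$, exploiting the fact that the colouring is dictated solely by the root decomposition of $g$. Write $g = g^0(g_1(\by^1), \ldots, g_r(\by^r))$ as in Section~\ref{s:preproof}, so that a variable has colour $k$ exactly when it belongs to $\by^k$, i.e. when it is a leaf of the $k$th root subtree of $g$. Let $p$ be the partial assignment defining $g' = g_p$. Since the sets $\by^1, \ldots, \by^r$ are disjoint, projection acts componentwise: $g' = \hat{g}^0(\ldots, (g_k)_p, \ldots)$, where $\hat{g}^0$ is the projection of the prime function $g^0$ obtained by fixing exactly those inputs $j$ for which $(g_j)_p$ is a constant. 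Because $x$ and $y$ survive in $g'$ and carry colour $k$, the component $(g_k)_p$ is non-constant, depends on both $x$ and $y$, and all of its variables have colour $k$; moreover it is plugged into a single input of $\hat{g}^0$.

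First I would describe $T'$ explicitly. Take normalized trees for $\hat{g}^0$ and for each surviving $(g_j)_p$, and attach the tree of $(g_j)_p$ at the corresponding input leaf of $\hat{g}^0$'s tree. The only breach of the normalization conditions this can create is a pair of adjacent internal nodes carrying the same symbol $\circ \in \bconn$ at an attachment point, which is repaired by a single merge; using the second normalization property (no two adjacent internal nodes carry the same $\circ$) for each of the two trees being joined, I would verify that the parent and all siblings involved differ from $\circ$, so that no merging cascades and the result is the canonical tree $T'$ (unique by the Kuznetsov-type uniqueness of read-once representations). In particular the entire subtree of $(g_k)_p$ hangs below one node $u$ of $\hat{g}^0$'s tree, and every leaf in it has colour $k$.

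Then I would locate $v = \lca(x, y)$. Since $x$ and $y$ are leaves of $(g_k)_p$, the only way $v$ can fail to lie inside the subtree of $(g_k)_p$ is when the root $\mu$ of that subtree is a $\circ$-gate that merged with its parent $u$ (so $u$ carries the same symbol $\circ$) and $x, y$ descend from $\mu$ through distinct children of $u$. In every other situation $v$ lies within the subtree of $(g_k)_p$, whence all leaves of $(T')_v$ have colour $k$; this already settles the prime case, since a node produced by $\circ$-merging is labelled $\circ$ and never prime, so a prime $v$ can never be the merged node. In the remaining situation $v = u$ is labelled $\circ$, and the two child subtrees $(T')_v^x$ and $(T')_v^y$ are former children of $\mu$, hence subtrees of $(g_k)_p$, so all their leaves have colour $k$; this settles the $\circ$ case. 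Note that $u$ may well have further children drawn from $\hat{g}^0$ whose colours differ from $k$, which is precisely why the conclusion in the $\circ$ case is confined to the two distinguished subtrees rather than to all of $(T')_v$.

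The main obstacle is the tree-surgery claim of the second paragraph: making rigorous that the canonical tree of $g_p$ is obtained from that of $g$ by pruning projected-out leaves, collapsing components that become constant, suppressing degree-one nodes, and performing only local single-level merges, together with the ensuing control over $\lca(x,y)$. The conceptual heart, however, is the clean dichotomy that merging can only ever create a $\circ$-node: this is exactly what upgrades the conclusion to all of $(T')_v$ when $v$ is prime, while restricting it to the two relevant child subtrees when $v$ is a $\circ$-gate.
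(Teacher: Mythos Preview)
Your argument is correct and is essentially a fleshed-out version of the paper's own justification, which is only a one-paragraph sketch: both track how the normalized tree of $g$ is transformed under projection (nodes and subtrees are removed, prime nodes are replaced by trees of their projections, and adjacent identical $\circ$-nodes are merged), and both hinge on the observation that the block $\by^k$ survives as a single-coloured subtree whose root can at worst be absorbed into a $\circ$-labelled parent. Your compositional presentation (build $T'$ from a tree for $\hat g^0$ with trees for the $(g_j)_p$ attached, then perform a single local $\circ$-merge at each attachment point) is a slightly different organization of the same tree surgery, and your explicit remark that a merged node is always a $\circ$-node---hence a prime $v$ must lie strictly inside the colour-$k$ subtree---makes transparent exactly why the two cases in the statement differ.
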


\noindent
Proposition~\ref{p:lcag} follows from a simple fact that substitutions
of constants for variables of $g$ can result in removing nodes and subtrees
from $T'$, or in replacing nodes with trees that represent projections
of prime functions. Adjacent nodes labeled
with identical symbols $\circ \in \bconn$ are subsequently glued together,
but least common ancestors of each $\by^i$ either remain roots of single-coloured
subtrees, or ``support'' subsets of single-coloured subtrees of internal
nodes labeled with $\circ \in \bconn$.

For technical reasons, we will also need the following proposition,
which holds true for all discrete functions (not necessarily read-once
or even Boolean) and follows from Theorem~B in~\cite{davies}.

\begin{proposition}
\label{p:hcexpand}
    Suppose that $f$ is an arbitrary function depending
    on $n$ variables $\bx$. Also suppose that there
    exists a relevance hypercube for some $p$ variables $\bu \subseteq \bx$.
    Then for every $q$ such that $p \le q \le n$ there exists a relevance
    hypercube for some $q$-sized set of variables $\bw$ such that
    $\bu \subseteq \bw \subseteq \bx$.
\end{proposition}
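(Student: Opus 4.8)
The plan is to reduce the statement to a one-step augmentation lemma and then iterate. Concretely, it suffices to prove: if $\bu \subseteq \bx$ admits a relevance hypercube and $|\bu| < n$, then $\bu \cup \{z\}$ admits a relevance hypercube for some $z \in \bx \setminus \bu$. Applying this repeatedly, starting from the given $\bu$ of size $p$ and adding one variable at a time (which remains possible as long as the current set is smaller than $n$, since $f$ depends on all of $\bx$), yields a nested chain of justifiable sets of every size from $p$ up to $n$, and in particular one of size $q$ that contains $\bu$ and is contained in $\bx$.

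Throughout I would use the reformulation stated in the excerpt: a relevance hypercube for a set $W$ is the same thing as a partial assignment $\rho$ to $\bx \setminus W$ whose induced projection $f_\rho$ depends on all variables of $W$. The basic monotonicity fact I will lean on is that un-fixing a variable cannot destroy dependence: if $f_\rho$ depends on all of $W$ and $z$ is a currently fixed variable, then the projection $f_{\rho \setminus z}$ obtained by freeing $z$ still depends on all of $W$, because $f_\rho$ is itself a projection of $f_{\rho \setminus z}$. Hence, to prove the augmentation lemma, it is enough to find \emph{some} justifying assignment $\rho$ for $\bu$ and \emph{some} $z \in \bx \setminus \bu$ such that freeing $z$ from $\rho$ yields a projection that additionally depends on $z$; such a projection then depends on all of $\bu \cup \{z\}$.

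For the lemma itself I would argue by contradiction. Call an assignment $\rho$ to $\bx \setminus \bu$ \emph{good} if $f_\rho$ depends on all of $\bu$; at least one good assignment exists by hypothesis. Assume that for every good $\rho$ and every $z \in \bx \setminus \bu$, freeing $z$ produces no dependence on $z$. The crucial observation is then that flipping a single coordinate of a good assignment changes nothing: if $\rho$ is good and $\rho'$ differs from $\rho$ only in the value of one variable $z_0 \in \bx \setminus \bu$, then $f_{\rho \setminus z_0}$ is independent of $z_0$ by assumption, so $f_{\rho'} = f_\rho$ as functions of $\bu$; in particular $\rho'$ is again good and induces the same projection. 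Since the Boolean cube $\{0,1\}^{\bx \setminus \bu}$ of all assignments is connected under single-coordinate flips, an induction along flip-paths shows that every assignment to $\bx \setminus \bu$ is good and induces one and the same function of $\bu$. But that says precisely that $f(a,\beta)$ does not depend on $\beta \in \{0,1\}^{\bx \setminus \bu}$, i.e.\ $f$ depends on no variable of $\bx \setminus \bu$, contradicting the assumption that $f$ depends on all $n$ variables (and $\bx \setminus \bu \ne \emptyset$ since $|\bu| < n$).

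I expect the main obstacle to be exactly this hard case of the augmentation lemma. It is tempting to try to free a variable from the given assignment directly, but in general one must be prepared to modify $\rho$ on $\bx \setminus \bu$ first, because at the original assignment every single-coordinate flip might be insensitive even though the global function is not. The connectivity-plus-flip-invariance argument above is what lets me move among good assignments at no cost and force the contradiction; this is the step that replaces the appeal to the structural result (Theorem~B of~\cite{davies}) cited in the statement. The remaining bookkeeping — verifying via the monotonicity fact that the augmented projection depends on all of $\bu \cup \{z\}$, and that the iteration never leaves $\bx$ — is routine.
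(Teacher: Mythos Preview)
Your argument is correct. The paper does not actually prove this proposition; it only remarks that the statement holds for all discrete functions and follows from Theorem~B of Davies~\cite{davies}. Your approach supplies a self-contained elementary proof in its place: the reduction to a one-step augmentation lemma is standard, and the flip-invariance plus hypercube-connectivity argument for that lemma is sound---once you assume that freeing any coordinate of any good assignment never creates dependence on that coordinate, the set of good assignments is closed under single-coordinate flips with unchanged projection, so connectivity forces $f$ to ignore all of $\bx \setminus \bu$, contradicting $|\bu| < n$. What the citation to Davies buys is brevity and the stated generality to arbitrary discrete functions; what your route buys is that nothing external is needed, and in fact the same connectivity argument goes through verbatim for finite-valued variables if ``flip one bit'' is replaced by ``change one coordinate.''
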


\noindent
Last but not least, we will use the following fact (see, e.\,g.,~\cite{aavpmi23}).

\begin{proposition}
\label{p:eq}
    Suppose that a read-once function $f$ is represented by a tree
    and $p$ variables $\bu$ are
    taken from $p$ different subtrees of an internal node $v$,
    which is labeled with a prime function of arity $p$
    or with a symbol $\circ \in \bconn$. Then $f$
    has at least one relevance hypercube for $\bu$ and, moreover, restrictions
    of $f$ to all such hypercubes are:
    \vspace{-1ex}
    \begin{enumerate}
    \renewcommand{\theenumi}{\alph{enumi}}
    \renewcommand{\labelenumi}{\textup{(}\theenumi\textup{)}}
    \item similar to $h(z_1, \ldots, z_p)$ if $v$ is labeled with a prime
          function $h$;
    \item similar to $z_1 \circ \ldots \circ z_p$ if $v$ is labeled with
          a symbol $\circ \in \bconn$.
    \end{enumerate}
\end{proposition}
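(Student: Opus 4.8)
The plan is to establish the two assertions separately: first that at least one relevance hypercube for $\bu$ exists and has the stated form, and then that \emph{every} relevance hypercube for $\bu$ has this form. Write $\bu = \{u_1, \ldots, u_p\}$, and let $T_{w_1}, \ldots, T_{w_p}$ be the $p$ distinct subtrees of $v$ with $u_j$ a leaf of $T_{w_j}$. Since $f$ is read-once and depends on all its variables, the label of every internal node depends on all of its inputs; in particular the label of $v$ depends on each of the $p$ subtrees carrying our variables. (In the prime case $v$ has exactly $p$ subtrees, one per variable; in the $\circ$ case it may have additional children.)

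For existence, the idea is to transmit each $u_j$ undisturbed up to $v$ and then transmit the value of $v$ up to the root. The core step is an isolation sublemma: inside a subtree, by fixing all leaves other than a single relevant variable $u_j$, one can force that subtree to output exactly $u_j$ or $\no u_j$. I would prove this by induction along the path from $u_j$ to the subtree's root: at a node labeled $\circ \in \bconn$ one sets the sibling subtrees to the identity element of $\circ$ (namely $1$ for $\land$ and $0$ for $\lor, \oplus$), and at a node labeled with a prime function one sets the remaining inputs to constants for which the restriction in the live input is non-constant---such constants exist because the label depends on that input, and a non-constant one-variable Boolean function is necessarily the identity or its negation. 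Applying this in each $T_{w_j}$, then setting the remaining children of $v$ (in the $\circ$ case) to the identity element of $\circ$, and finally transmitting the value of $v$ to the root by the same device, yields a partial assignment whose projection is $h(u_1^{\tau_1}, \ldots, u_p^{\tau_p})^\sigma$ in the prime case and $(u_1^{\tau_1} \circ \ldots \circ u_p^{\tau_p})^\sigma$ in the $\circ$ case, for suitable $\tau_j, \sigma$. This projection depends on all $p$ variables and is similar to $h$ (respectively to $z_1 \circ \ldots \circ z_p$), so it determines a relevance hypercube of the required kind.

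For universality, consider an arbitrary partial assignment $\rho$ giving a relevance hypercube for $\bu$, so that $f_\rho$ depends on all of $u_1, \ldots, u_p$. Inside each $T_{w_j}$ every leaf except $u_j$ is fixed by $\rho$, so $T_{w_j}$ computes a one-variable function of $u_j$; as $f_\rho$ depends on $u_j$, this function is non-constant and hence equals $u_j^{\tau_j}$. At $v$ the $p$ chosen children therefore feed in $u_1^{\tau_1}, \ldots, u_p^{\tau_p}$, while in the $\circ$ case the remaining children evaluate to constants under $\rho$; dependence on all $u_j$ forces these constants to be the identity element of $\circ$ for $\land, \lor$, and merely contributes a global negation for $\oplus$. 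Thus the value at $v$ equals $F(u_1^{\tau_1}, \ldots, u_p^{\tau_p})$ up to negation, where $F$ denotes $h$ or $\circ$.

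Finally I would push this value from $v$ to the root along their connecting path. At each node on this path the value arriving from below is combined with sibling subtrees that contain no variable of $\bu$ and hence are constant under $\rho$; since all of $u_1, \ldots, u_p$ funnel through this single path, any node that annihilated the incoming value (an $\land$ with a $0$ sibling, an $\lor$ with a $1$ sibling, or a prime label whose restriction becomes constant) would destroy the dependence of $f_\rho$ on every $u_j$ at once. As $f_\rho$ depends on all of them, no such annihilation occurs, and each node merely passes the value through, possibly negated. Hence $f_\rho$ equals $F(u_1^{\tau_1}, \ldots, u_p^{\tau_p})$ up to an overall negation, which is similar to $h$ in the prime case and to $z_1 \circ \ldots \circ z_p$ in the $\circ$ case. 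I expect the universality direction, and specifically this faithful-transmission argument, to be the main obstacle: the crux is the observation that because the whole of $\bu$ lies below $v$, preserving dependence on even a single $u_j$ along the path already forces the entire path to be transparent.
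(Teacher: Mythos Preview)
Your argument is correct and follows the natural line one would expect: isolate each $u_j$ through its subtree by choosing identity elements at $\circ$-nodes and non-constant restrictions at prime nodes, then observe that in \emph{any} relevance hypercube the same transparency is forced because a single annihilation along the path from $v$ to the root would kill dependence on all of $\bu$ at once. The only thing to be aware of is that the paper itself does not prove this proposition at all---it is stated as a known fact with a reference to~\cite{aavpmi23}---so there is no ``paper's own proof'' to compare against. Your write-up would thus serve as a self-contained justification that the paper omits.
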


\section{Auxiliary Lemmas}
\label{s:lemmas}

Suppose that read-once functions $f$ and $g$ satisfy all the
assumptions made in Section~\ref{s:preproof} and $T$ is a tree of $f$.
Recall that by $\bx$ we denote the set of
all variables relevant to $f$.
We say that a set $\bu \subseteq \bx$ is \df{stable}
iff for any set $\bw$ such that
$\bu \subseteq \bw \subseteq \bx$ and any relevance
hypercube $H$ for $\bw$ there exists a relevance hypercube $H'$
for $\bu$ such that $H' \subseteq H$.

\begin{remark*}
    The definition of a stable set $\bu$ does not require the existence of
    relevance hypercubes for all sets $\bw$ such that $\bu \subseteq \bw \subseteq \bx$.
    What is says is that \emph{if} there exists such a relevance hypercube $H$,
    then there exists a relevance hypercube for $\bu$
    which is a subcube of $H$. For $\bw = \bx$, however, the definition requires
    that at least one relevance hypercube for $\bu$ exists.
\end{remark*}

\noindent
One can readily observe that all singleton subsets of $\bx$
are stable. Examples of sets that are not stable are given by $\bu = \{ u_0, u_1 \}$
(as witnessed by $\bw = \bu \cup \{ y \}$) for functions
$f_1 = (\no x \land d(y, u_0, u_1)) \lor (x \land (y \lor u_0 \lor u_1))$
and
$f_2 = (\no x \land d(y, u_0, u_1)) \lor (x \land (u_0 \lor u_1))$,
where
$d(y, u_0, u_1) = (\no y \land u_0) \lor (y \land u_1)$.
Our main ingredient in the proof of the main theorem is given by the following
lemma, which allows us to establish a link between the tree structure
of our functions $f$ and $g$.

\begin{lemma}
\label{l:mainl}
    For any stable set $\bu$ of at most $l$ variables of the function $f$,
    the function $g$ agrees with $f$ on some relevance hypercube for $\bu$.
\end{lemma}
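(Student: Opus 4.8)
The plan is to exploit the fact that $M$ stores relevance hypercubes only of the full dimension $l$, and to reach a relevance hypercube for the possibly smaller set $\bu$ by sandwiching it inside a suitable $l$-dimensional one taken from $M$. Concretely, I would ascend from $\bu$ to an $l$-sized superset with Proposition~\ref{p:hcexpand}, locate a relevance hypercube for that superset inside $M$ (where $f$ and $g$ agree by hypothesis), and then descend back to $\bu$ inside that very hypercube using the stability of $\bu$. The whole argument hinges on the observation that stability is exactly the property licensing this descent: an arbitrary $l$-dimensional hypercube in $M$ need not contain any relevance hypercube for $\bu$, so without stability the agreement of $f$ and $g$ on it would be useless.

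First I would check that a relevance hypercube for $\bu$ exists at all. Applying the definition of stability to the pair $\bw = \bx$, together with the relevance hypercube for $\bx$ furnished by the empty assignment (the function $f$ depends on all of $\bx$), yields a relevance hypercube for $\bu$. This is precisely the hypothesis required by Proposition~\ref{p:hcexpand}, which I would then invoke with $q = l$ to obtain a relevance hypercube for some $l$-sized set $\bw$ with $\bu \subseteq \bw \subseteq \bx$. Since $|\bw| = l$ and such a hypercube exists, and since $M$ is a relevance hypercube set of dimension $l$ for $f$, the set $M$ contains a relevance hypercube $H$ for $\bw$; as $f$ and $g$ coincide on every vector of $M$, they coincide on $H$.

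It remains to descend. Invoking stability of $\bu$ once more, now with this particular $\bw$ and its relevance hypercube $H \subseteq M$, I obtain a relevance hypercube $H'$ for $\bu$ satisfying $H' \subseteq H$. Because $f$ and $g$ already agree on $H$, they agree on the subcube $H'$, which is the desired relevance hypercube for $\bu$. I do not expect any single step to be computationally hard; the delicate points are merely the applicability of Proposition~\ref{p:hcexpand} (supplied by the existence of a relevance hypercube for $\bu$, itself a consequence of stability) and the tacit assumption $n \ge l$, ensuring that the ascent can reach dimension exactly $l$.
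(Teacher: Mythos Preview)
Your proposal is correct and follows essentially the same approach as the paper: use stability with $\bw = \bx$ to get a relevance hypercube for $\bu$, ascend via Proposition~\ref{p:hcexpand} to an $l$-sized superset, pick a hypercube for it from $M$ (where $f$ and $g$ agree), and descend via stability to a subcube for $\bu$. Your write-up is actually more explicit than the paper's in spelling out the inclusion $H' \subseteq H$ and in flagging the tacit assumption $n \ge l$.
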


\begin{proof}
We start with the definition of a stable set. First choose $\bw = \bx$
and conclude that $f$ has a relevance hypercube for $\bu$. By Proposition~\ref{p:hcexpand},
$f$ also has a relevance hypercube for some $l$-sized set of variables $\bw'$
such that $\bu \subseteq \bw'$.
It follows that $M$ contains some relevance hypercube for $\bw'$. Since
$f$ and $g$ agree on all vectors from $M$, and $\bu$ is stable, it also
follows that $f$ and $g$ agree on some relevance hypercube for $\bu$.
This concludes the proof.
\end{proof}

\noindent
More than once we will need subsets of input variables having specific structure.
We call a set $\bu$ \df{conservative}
iff for each internal node $v$ of $T$ labeled with a prime function $h$
the number of subtrees of $v$ containing at least one variable from $\bu$
is equal either to $0$, or to $1$, or to the arity of $h$.
To understand the intuition behind this term, consider the restriction of $f$
to any relevance hypercube for such a set. In the tree of such a restriction,
each node of $T$ labeled with a prime function is either preserved or discarded,
i.\,e., no constant substitutions and further transformations occur at these nodes.

\begin{lemma}
\label{l:st}
    All conservative sets are stable.
\end{lemma}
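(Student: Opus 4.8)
The plan is to prove the lemma by structural induction on the tree $T$ of $f$, peeling off the root node. Write $\bx = \bx_1 \sqcup \ldots \sqcup \bx_m$ for the partition of the variables induced by the root subtrees $C_1, \ldots, C_m$, so that $f = f^0(f_1(\bx_1), \ldots, f_m(\bx_m))$ with $f_j$ the read-once function computed by $C_j$. Two routine observations frame the induction. First, conservativeness is inherited: every prime node of $C_j$ is a prime node of $T$ with the same subtrees, so $\bu \cap \bx_j$ is conservative in $f_j$. Second, given the data for stability — a set $\bw$ with $\bu \subseteq \bw \subseteq \bx$ and a relevance hypercube $H$ for $\bw$, realized by an assignment $p$ on $\bx \setminus \bw$ with $f_p$ depending on all of $\bw$ — each $w \in \bw \cap \bx_j$ occurs only in the $j$-th argument, so $(f_j)_{p_j}$ already depends on all of $\bw \cap \bx_j$. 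Since $H$ factors as $\prod_j H|_{\bx_j}$ over the root subtrees, each factor $H|_{\bx_j}$ is a relevance hypercube for $\bw \cap \bx_j$ in $f_j$, and the inductive hypothesis therefore applies to every subtree that meets $\bu$. The goal in each case is to assemble the desired $H' \subseteq H$ for $\bu$ as a product of pieces, one per subtree.

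When the root is a symbol $\circ \in \bconn$, I would take for each subtree meeting $\bu$ a sub-hypercube $H'_j \subseteq H|_{\bx_j}$ for $\bu \cap \bx_j$ supplied by induction, and drive every remaining active subtree (one meeting $\bw$ but not $\bu$) to the neutral constant of $\circ$; this is achievable because such $(f_j)_{p_j}$ is non-constant, and for $\oplus$ any constant suffices. The constraints forced by $H$ being a relevance hypercube for $\bw$ already render the subtrees disjoint from $\bw$ harmless for $\circ$. Over the resulting $H'$ the function collapses to $\circ$ applied to the disjoint non-constant functions $(f_j)|_{H'_j}$, which manifestly depends on all of $\bu$.

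The prime case is where the conservativeness hypothesis does the real work and is the main obstacle. Here $f^0 = h$ is prime of some arity $a$, and the $0/1/a$ condition leaves exactly two non-trivial possibilities. If every one of the $a$ subtrees meets $\bu$, I take a sub-hypercube $H'_j$ for each $\bu \cap \bx_j$ by induction, set $H' = \prod_j H'_j \subseteq H$, and observe that over $H'$ the function equals $h(g_1, \ldots, g_a)$ with each $g_j = (f_j)|_{H'_j}$ non-constant on its own variables; since $h$ depends on each argument (cf. Proposition~\ref{p:eq}), any chosen $u \in \bu \cap \bx_j$ can be sensitised by fixing the other $g_i$ to a tuple making $h$ non-constant in coordinate $j$ — each such value being reachable precisely because the $g_i$ are non-constant. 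If instead a single subtree $C_1$ contains all of $\bu$, I use the inductive hypothesis inside $C_1$ and must freeze the other subtrees to constants that keep $h$ sensitive to its first argument. The delicate point — which is the crux of the whole proof — is that an arbitrary sensitising tuple need not be attainable, since subtrees disjoint from $\bw$ are pinned by $p$; I would resolve this by reading the required constants off a single assignment witnessing that $f_p$ depends on some $u \in \bu$, as such a witness necessarily sensitises $h$ while respecting the pinned values, yielding $f|_{H'} \equiv \pm (f_1)|_{H'_1}$, which depends on all of $\bu$. The base case of a single leaf is immediate, as singletons are stable; and throughout, the factorisation of $H$ and of the $H'_j$ over the disjoint $\bx_j$ guarantees $H' \subseteq H$. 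The reason conservativeness is exactly the right hypothesis is visible in the prime step: the $0/1/a$ condition is precisely what ensures that whenever a variable of $\bu$ must be sensitised through a prime node, all the sibling subtrees that have to be frozen are free of $\bu$, so freezing them never destroys the relevance we are trying to preserve.
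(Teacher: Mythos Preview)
Your argument is correct, but it is organised differently from the paper's. The paper inducts on the number of \emph{branching nodes} for $\bu$ (internal nodes with at least two subtrees meeting $\bu$) and jumps straight to $v = \lca(\bu)$, writing $f = h_0\bigl(\bz^0,\, h(h_1(\bz^1),\ldots,h_m(\bz^m))\bigr)$ with $h$ the label of $v$. Everything above $v$ is collapsed into a single wrapper $h_0$, and the only extra work beyond the inductive calls inside the $\bz^i$ is to extend $p_0$ so that $h_0(\bz^0,u)$ becomes a literal of $u$; this is exactly your ``read the constants off a sensitising witness'' trick, applied once. By contrast, you induct structurally on $T$ from the root, which forces you to treat three separate cases at each level: root labelled $\circ$, root prime with all subtrees meeting $\bu$, and root prime with a single subtree containing $\bu$. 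Your third case is precisely the step the paper's $\lca$ shortcut avoids having to iterate; conversely, the paper's single ``make $h_0$ a literal'' step amortises your repeated descent through non-branching nodes. Both arguments rest on the same two ingredients---the factorisation of $H$ over disjoint variable blocks and the witness-reading device for freezing sibling subtrees---so the difference is purely in bookkeeping: the paper's induction parameter yields a shorter proof, while your structural induction is more explicit about each node type and makes the role of the $0/1/a$ condition at prime nodes more visibly exhaustive.
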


\begin{proof}
Let $\bu$ be a conservative set of variables.
We say that an internal node of $T$ is a \df{branching node} for $\bu$
iff at least two subtrees of $v$ contain leaves from $\bu$.
The proof is by induction over the number $b$ of branching nodes for $\bu$
in $T$. If $b = 0$, then $|\bu| \le 1$ and there is nothing to prove.
Suppose that $b \ge 1$ and $v = \lca(\bu)$.
Then $v$ is a branching node and all other branching nodes
are descendants of $v$. Therefore, $f$ can be expressed by a formula
\begin{equation*}
    h_0 \bigl(\, \bz^0,\  h( h_1(\bz^1), \ldots, h_m(\bz^m)) \,\bigr),
\end{equation*}
where $h$ is the function corresponding to the label of $v$,
sets $\bz^i$ and $\bz^j$ are disjoint for $i \ne j$
and (by our definition of a conservative set) variables $\bu$ are contained in each
$\bz^i$, $i \ge 1$, but not in $\bz^0$.
\par
Identify a relevance hypercube $H$ for some variables $\bw$ (here $\bu \subseteq \bw$) with a partial
assignment $p$ of constants to variables $\bx$. Split $p$ into $p_0, p_1, \ldots, p_m$
according to the partitioning given by $\bz^0, \bz^1, \ldots, \bz^m$.
Subsets of $\bu$ contained in $\bz^i$, $i \ge 1$, are conservative for trees
representing functions $h_i(\bz^i)$, and the number of branching nodes in any such tree
is at most $b - 1$.
By the inductive assumption, there exist partial assignments $p'_1, \ldots, p'_m$
which are extensions of $p_1, \ldots, p_m$ and restrict relevance hypercubes
for these subsets. Projections $h'_i$ induced by $p'_i$ depend
on these subsets of $\bu$. If we now choose an extension $p'_0$ of $p_0$
taking $h_0(\bz^0, u)$ to a literal of $u$,
the composition of $p'_0, p'_1, \ldots, p'_m$ will restrict the needed
relevance hypercube $H'$. This concludes the proof.
\end{proof}

\noindent
In Section~\ref{s:preproof}, we defined the colouring of variables induced by the read-once
representation of $g$.
The following lemmas reveal some properties of this colouring
that are related to the structure of $T$.
These properties reflect the observation formulated in
Proposition~\ref{p:lcag}.

\begin{lemma}
\label{l:j}
    Suppose that variables $x$ and $y$ both have colour $k$.
    Also suppose that the node $v = \lca(x, y)$ in $T$
    is labeled with a prime function $h$.
    Then all the leaves in the subtree $T_v$ have the same colour $k$.
\end{lemma}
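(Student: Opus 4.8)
The plan is to read off colour information from \emph{prime projections} of $g$, which I can force to exist by invoking Lemmas~\ref{l:st} and~\ref{l:mainl} on carefully chosen sets of variables, and then to transfer colours through Proposition~\ref{p:lcag}. Let $p$ be the arity of $h$; since $h$ is prime and lies in $\bl$, we have $3 \le p \le l$. Write $T_{w_1}, \ldots, T_{w_p}$ for the subtrees of $v$, with $x \in T_{w_1}$ and $y \in T_{w_2}$ (the two may of course coincide with arbitrary members of those subtrees, but $x$ and $y$ lie in distinct subtrees because $v = \lca(x,y)$).

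First I would establish the following tool. Take any \emph{transversal} $\bu = \{r_1, \ldots, r_p\}$ obtained by selecting one leaf $r_i$ from each subtree $T_{w_i}$. Such a $\bu$ is conservative: at $v$ every one of its $p$ subtrees meets $\bu$, so the relevant count equals the arity $p$; at a proper ancestor of $v$ the whole of $\bu$ lies in a single subtree; at a proper descendant of $v$ inside some $T_{w_i}$ only the single variable $r_i$ occurs; and elsewhere $\bu$ is absent. Hence $\bu$ is stable by Lemma~\ref{l:st}, and since $|\bu| = p \le l$, Lemma~\ref{l:mainl} yields a relevance hypercube $H$ for $\bu$ on which $g$ agrees with $f$. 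Because the $r_i$ lie in $p$ distinct subtrees of the prime node $v$, Proposition~\ref{p:eq}(a) makes the restriction of $f$ to $H$ similar to the prime function $h(z_1, \ldots, z_p)$; as $g$ agrees with $f$ on $H$, the induced projection $g'$ of $g$ is this same prime function. Its tree is therefore a single prime node with leaves $r_1, \ldots, r_p$, so the least common ancestor of any two of them is that (prime) root. Consequently, whenever $\bu$ contains two variables of colour $k$, Proposition~\ref{p:lcag} forces \emph{all} of $r_1, \ldots, r_p$ to have colour $k$.

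With this tool I would argue in two steps. Applying it to the transversal containing $x$ and $y$ — two colour-$k$ variables by hypothesis — colours its remaining members $r_3, \ldots, r_p$ with $k$; together with $x, y$ this produces one colour-$k$ representative in every subtree of $v$. For an arbitrary leaf $z$ of $T_v$, say $z \in T_{w_i}$, I would then apply the tool to the transversal obtained from these representatives by replacing $r_i$ with $z$. Since $p \ge 3$, this transversal still retains at least two colour-$k$ representatives $r_j$ with $j \ne i$, so Proposition~\ref{p:lcag} applies and colours every leaf of the corresponding prime projection, in particular $z$. As $z$ ranges over all leaves of $T_v$, the entire subtree is coloured $k$.

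The one genuinely delicate point is that a single transversal reaches only the $p$ leaves it names, so it cannot by itself colour all of $T_v$; the essential trick is the bootstrapping above, where primality of $h$ (hence $p \ge 3$) guarantees that after fixing colour-$k$ representatives one may swap in any target leaf while still keeping two colour-$k$ anchors to drive Proposition~\ref{p:lcag}. The remaining verifications — conservativity of each transversal and the passage from the primality of $f$'s restriction to that of $g$'s via agreement on $H$ — are routine.
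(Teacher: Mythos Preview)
Your proposal is correct and is essentially the paper's own argument: choose a transversal of the subtrees of the prime node $v$, observe it is conservative and hence stable, invoke Lemma~\ref{l:mainl} together with Proposition~\ref{p:eq}(a) to obtain a prime projection of $g$, and then transfer colours via Proposition~\ref{p:lcag}, varying the transversal to reach every leaf of $T_v$. The only difference is organisational---the paper lets the $z_i$ range freely over the $m-2$ subtrees not containing $x,y$ and then re-anchors on the pairs $x,z_1$ and $y,z_1$ to colour $T_v^y$ and $T_v^x$, whereas you first fix a full set of colour-$k$ representatives and then swap in one target leaf at a time, relying on $p\ge 3$ to keep two anchors.
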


\begin{proof}
Let $m$ be the arity of $h$. Take arbitrary variables $z_1, \ldots, z_{m - 2}$
such that $x, y, z_1, \ldots, z_{m - 2}$ are contained in $m$ different subtrees of $v$.
The set $\bu = \{x, y, z_1, \ldots, z_{m - 2}\}$ is conservative and,
therefore, stable (by Lemma~\ref{l:st}).
Since $h$ is prime and $f$ is read-once over $\bl$, we see that $m \le l$.
It follows from Lemma~\ref{l:mainl} that $f$ agrees with $g$ on some
relevance hypercube for $\bu$. By Proposition~\ref{p:eq}, the restriction
of $f$ to any such hypercube is similar to $h$.
Therefore, some projection $g'$ of $g$ is represented
by a tree $T'$ with exactly one internal node, which is labeled with a prime function.
Variables $x$ and $y$ are relevant to $g'$ and have the same colour $k$.
It then follows from Proposition~\ref{p:lcag} that all variables
$z_1, \ldots, z_{m - 2}$ have colour $k$ too. Since $z_1, \ldots, z_{m - 2}$
were chosen arbitrarily from their subtrees, we obtain that all leaves
in $T_v^{z_1}, \ldots, T_v^{z_{m - 2}}$ have colour $k$.
Repeating the same reasoning for initial pairs $x, z_1$ and $y, z_1$ in place of $x, y$ reveals
that all leaves in $T_v^y$ and $T_v^x$ also have the same colour $k$.
This concludes the proof.
\end{proof}

\begin{lemma}
\label{l:cs}
    For each colour $k \in \{1, \ldots, r\}$, there exists a unique
    index $i \in \{1, \ldots, s\}$ such that $\by^k \subseteq \bx^i$, i.\,e.,
    all variables coloured with $k$ belong to the set~$\bx^i$.
\end{lemma}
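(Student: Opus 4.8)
The plan is to argue by contradiction, reducing the whole statement to a single application of Lemma~\ref{l:j} at the root of $T$. First I would fix a colour $k \in \{1, \ldots, r\}$ and suppose, for contradiction, that the colour class $\by^k$ is \emph{not} contained in any single root subtree $\bx^i$ of $f$. Since the sets $\bx^1, \ldots, \bx^s$ partition $\bx$, this means $\by^k$ meets at least two distinct root subtrees of $T$, so I can choose variables $x, y \in \by^k$ lying in different root subtrees. By the very definition of root subtrees, the least common ancestor $\lca(x, y)$ in $T$ is then the root of $T$.

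The key step is to recall that the root of $T$ is labeled with $f^0$, which is prime by the standing assumptions of Section~\ref{s:preproof}. Thus $x$ and $y$ share colour $k$ and their least common ancestor in $T$ is labeled with a prime function, so Lemma~\ref{l:j} applies and forces \emph{all} leaves of $T$ (the subtree rooted at the root is $T$ itself) to have colour $k$. In other words, every variable in $\bx = \by$ would receive colour $k$.

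This is the contradiction I am after. Since $g^0$ is prime, we have $r \ge 3$, so there is some colour $k' \ne k$ whose class $\by^{k'}$ is nonempty (each root subtree of $g$ contains at least one leaf), and any variable in $\by^{k'}$ carries colour $k' \ne k$. Hence the assumption fails, and $\by^k \subseteq \bx^i$ for some index $i$. Uniqueness of $i$ is then immediate: the sets $\bx^i$ are pairwise disjoint and $\by^k$ is nonempty, so it can be contained in at most one of them.

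I expect no serious obstacle here; the argument is short once Lemma~\ref{l:j} is in hand. The only points demanding care are that the reasoning genuinely uses the primality of the root label of $T$ (so that Lemma~\ref{l:j} is applicable precisely at the root) and that $r \ge 2$ (so that not all variables can share a single colour). Both facts follow directly from the assumption that $f^0$ and $g^0$ are prime.
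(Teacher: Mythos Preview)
Your proposal is correct and follows essentially the same argument as the paper: pick two variables of colour $k$ in different root subtrees, observe that their $\lca$ is the root labeled with the prime function $f^0$, apply Lemma~\ref{l:j} to conclude all leaves share colour $k$, and derive a contradiction from the existence of other colours. The paper's version is terser but identical in substance, and your explicit remarks on uniqueness and on why $r \ge 3$ forces the contradiction are fine elaborations.
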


\begin{proof}
Take any two variables $x$ and $y$ having colour $k$. If they do not belong to
the same $\bx^i$, then they belong to different root subtrees of $T$.
Therefore, the node $\lca(x, y)$ is labeled with a prime
function $f^0$. By Lemma~\ref{l:j}, all leaves of $T$ have the same
colour, which is a contradiction.
\end{proof}

\noindent
Another way to state Lemma~\ref{l:cs} is to say that the partition $\by^1 \cup \ldots \cup \by^r$
is a refinement of $\bx^1 \cup \ldots \cup \bx^s$.

\begin{lemma}
\label{l:jr}
    For any non-root internal node $v$ in $T$ labeled with a prime function $h$
    of arity $r$ or greater, all leaves of $T_v$ have the same colour.
\end{lemma}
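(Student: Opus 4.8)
The plan is to reduce Lemma~\ref{l:jr} to the already-established Lemma~\ref{l:j}. That lemma guarantees that as soon as we exhibit two equally-coloured variables $x$ and $y$ with $\lca(x, y) = v$ and $v$ labeled by a prime function, the whole subtree $T_v$ becomes monochromatic. Since the node $v$ in question is prime of arity $m \ge r$, it has exactly $m$ subtrees $T_{w_1}, \ldots, T_{w_m}$, and any two leaves taken from \emph{different} subtrees of $v$ have $v$ as their least common ancestor. Hence it suffices to produce a single pair of leaves lying in distinct subtrees of $v$ that share a colour; Lemma~\ref{l:j} then finishes the argument at once.

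The core of the proof is therefore to show that such a same-coloured cross-subtree pair must exist, and I would do this by contradiction. Suppose no two leaves lying in distinct subtrees of $v$ share a colour. Then the sets of colours occurring in the subtrees $T_{w_1}, \ldots, T_{w_m}$ are pairwise disjoint, and each is non-empty because every subtree contains at least one leaf. Consequently the number of distinct colours appearing among the leaves of $T_v$ is at least $m$, hence at least $r$; but there are only $r$ colours in total, so in fact all $r$ colours must appear inside $T_v$.

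At this point I would exploit that $v$ is a \emph{non-root} node: being a proper descendant of the prime root, $T_v$ lies entirely within a single root subtree $\bx^i$ of $T$, and $\bx^i \subsetneq \bx$ since $f^0$ is prime (so $s \ge 3$). By Lemma~\ref{l:cs} each colour class $\by^k$ is contained in exactly one root subtree; as every colour already occurs inside $\bx^i$, each $\by^k$ must lie in $\bx^i$, whence $\bx = \by^1 \cup \ldots \cup \by^r \subseteq \bx^i \subsetneq \bx$, a contradiction. This forces the desired pair to exist. I expect the one step needing care to be exactly this counting argument, where the disjointness of the colour sets across the $m \ge r$ subtrees is combined with the confinement of $T_v$ to a single root subtree (via Lemma~\ref{l:cs}) to squeeze all $r$ colours into a proper subset of $\bx$; everything surrounding it is routine.
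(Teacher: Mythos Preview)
Your proof is correct and essentially matches the paper's argument: both derive a contradiction from the hypothesis that leaves in distinct subtrees of $v$ have pairwise distinct colours, via the same counting argument combined with Lemma~\ref{l:cs}, and both rely on Lemma~\ref{l:j} to finish. The only difference is cosmetic---the paper starts by assuming not all leaves of $T_v$ share a colour and uses the contrapositive of Lemma~\ref{l:j} to reach your disjoint-colour hypothesis, whereas you isolate the pair-finding step first and apply Lemma~\ref{l:j} directly at the end.
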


\begin{proof}
Let $v$ be an internal node of $T$ labeled with a prime function $h$
of arity at least $r$. If not all leaves of $T_v$ have the same colour, then
by Lemma~\ref{l:j} any two leaves from different subtrees of $v$
have different colours. It then follows that leaves of $T_v$ are coloured
with at least $r$ colours. By Lemma~\ref{l:cs}, leaves of other root subtrees
of $T$ cannot be coloured, since all colours are taken from the set
$\{1, \ldots, r\}$. This contradiction concludes the proof.
\end{proof}

\begin{lemma}
\label{l:j2}
    Suppose that variables $x$ and $y$ both have colour $k$.
    Also suppose that in $T$ the node $v = \lca(x, y)$
    is labeled with a symbol $\circ \in \bconn$.
    Then all the leaves in $T_v^x$ and $T_v^y$ have the same colour $k$.
\end{lemma}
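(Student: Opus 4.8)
The plan is to show that every leaf of $T_v^x$ has colour $k$; the claim for $T_v^y$ then follows by interchanging $x$ and $y$. Fix an arbitrary leaf $z \ne x$ of $T_v^x$ and put $u = \lca(x, z)$ in $T$, a proper descendant of $v$. The core mechanism mirrors Lemma~\ref{l:j}: I would exhibit a conservative (hence, by Lemma~\ref{l:st}, stable) set $\bu$ of at most $l$ variables containing $y$, the target $z$, and a variable $x'$ of colour $k$ lying, like $z$, on the $u$-side of $v$. By Lemma~\ref{l:mainl}, $f$ and $g$ then agree on some relevance hypercube for $\bu$, and in the tree $T'$ of the induced projection $g'$ the variables $x'$ and $y$ meet at $v$, which is labelled $\circ$, while $z$ lies in the same subtree $(T')_v^{x'}$ as $x'$. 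The connective case of Proposition~\ref{p:lcag}, applied to the colour-$k$ pair $(x', y)$, then forces $z$ to have colour $k$, and letting $z$ range over $T_v^x$ gives the result.

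Everything reduces to building $\bu$ so that $z$ stays inside the $x'$-subtree of $v$ while $\bu$ keeps at most $l$ variables, and here I would take $x' = x$ whenever possible. When $u$ is labelled with a connective other than $\circ$, the bare set $\{x, y, z\}$ works, since the restricted $T_v^x$ is rooted at $u$. When $u$ is prime I would invoke Proposition~\ref{p:eq}, which supplies a relevance hypercube only after $u$ is filled to its full arity; the budget is controlled by the dichotomy in Lemma~\ref{l:jr}. If $u$ has arity at least $r$, then Lemma~\ref{l:jr} already makes $T_u$ monochromatic and, since $x \in T_u$, we are done; if its arity is $a < r \le l$, then filling $u$ yields only $a + 1 \le l$ variables. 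In both cases no gluing occurs, because a prime root of $(T')_v^x$ cannot merge with the $\circ$-labelled $v$.

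The delicate case, which I expect to be the main obstacle, is when $u$ carries the same symbol $\circ$ as $v$. Restricting to a relevance hypercube for $\{x, y, z\}$ then collapses the path from $v$ down to $u$ and glues $u$ into $v$, so that $x$, $y$ and $z$ become siblings under one $\circ$; now $z$ escapes $(T')_v^x$ and Proposition~\ref{p:lcag} says nothing about it. The remedy is to branch one level higher, at the parent $N$ of $u$, which by our convention that identical connectives are never adjacent is not labelled $\circ$, so that keeping $N$ alive prevents the collapse and returns $z$ to $(T')_v^x$; if $N$ is prime I fill it, again invoking the Lemma~\ref{l:jr} dichotomy to respect the $l$-variable bound. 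Since $x$ and $z$ share the $u$-subtree of $N$, placing them together with $y$ and an extra leaf of $N$ can overshoot the budget in a single step; instead I would first colour an auxiliary leaf $z'$ in a sibling subtree of $N$ by applying the mechanism to the pair $(x, y)$, and then reach $z$ by applying it to the now-available colour-$k$ pair $(z', y)$. This two-step bootstrap keeps every set within $l$ variables and completes the argument.
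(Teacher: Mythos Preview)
Your proposal is correct but follows a genuinely different route from the paper's proof. The paper argues by induction on the depth of $T_v$ and by contradiction: assuming some leaf $z\in T_v^y$ has the wrong colour, it branches on the label of the \emph{root} $w$ of $T_v^y$; in the prime case it fills $w$ (with Lemma~\ref{l:jr} controlling the size), and in the connective case it invokes the inductive hypothesis inside $T_w$ to locate a wrongly coloured $z_0$ in a sibling subtree of $y$, producing the conservative triple $\{x,y,z_0\}$ whose projection $x\circ(y\star z_0)$ contradicts Proposition~\ref{p:lcag}. Your argument is direct and induction-free: you fix the target $z$ from the outset, branch on the label of $u=\lca(x,z)$, and when $u$ carries the same symbol $\circ$ (so that $\{x,y,z\}$ would glue into a single $\circ$-node) you escape by a two-step bootstrap through the parent $N$ of $u$, which cannot be labelled $\circ$; you first colour an auxiliary $z'$ from a sibling subtree of $N$ via the pair $(x,y)$, and then reach $z$ via the freshly obtained colour-$k$ pair $(z',y)$. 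Both proofs rest on the same toolkit (conservative sets, Lemmas~\ref{l:st}, \ref{l:mainl}, \ref{l:jr}, and Proposition~\ref{p:lcag}) and the same size bound $a+1\le r\le l$ for filling a prime node; the difference is purely organisational. Your bootstrap trades the paper's clean depth induction for an explicit recursion that terminates after at most two rounds, which is slightly more hands-on but arguably more transparent about why no step ever needs more than $l$ variables.
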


\begin{proof}
Define the \df{depth} of a subtree as the maximum number of edges
on (shortest) paths from its root to its leaves. Let $d$ be the depth of $T_v$.
The proof is by induction over $d$. For $d = 1$, there is nothing to prove.
Suppose that $d \ge 2$ and $T_v^y$ contains a leaf $z$ having colour $k' \ne k$.
We claim that for some $m \ge 0$ there exist variables $z_0, z_1, \ldots, z_m$
such that the set $\bu = \{x, y, z_0, z_1, \ldots, z_m\}$ is conservative,
has cardinality at most $l$, and the restriction of $f$ to any relevance
hypercube for $\bu$ can be obtained by negating the inputs and/or the output
of some function
$
    x \circ f'(y, z_0, z_1, \ldots, z_m)
$,
where the colours of $y$ and $z_0$ are different and
$f'$ is either a prime function or a binary function ($m = 0$) from $\bconn$
different from $\circ$.
\par
First suppose that the root $w$ of $T_v^y$ is labeled with a prime function $h$.
Since not all leaves of $T_v^y$ have the same colour, it follows from
Lemma~\ref{l:j} that colours of leaves taken from different subtrees of $T_v^y$ are
different. Take arbitrary variables $z_0, z_1, \ldots, z_m$, $m \ge 1$, from all subtrees
except $T_w^y$ (one variable from each subtree). Now $y$ and $z_0$ have
different colours and $m + 2 \le r - 1$ by Lemma~\ref{l:jr}.
One can easily see that the set $\bu$ constructed
in this way is conservative by definition and has cardinality at most $l$, because $r \le l$.
Proposition~\ref{p:eq} then reveals that
the restriction of $f$ to any relevance hypercube for $\bu$ indeed has the
needed form.
\par
Now consider the case when $w$ is labeled with a symbol $\star \in \bconn$.
By definition of a tree representing a read-once function, $\star$ is different
from $\circ$. Observe that the depth of the subtree $T_w$ is less than or equal to $d - 1$,
so we can use the inductive assumption for $T_w$. If $y$ and $z$ belong to the
same subtree $T'$ of $w$, then it follows that only leaves from $T'$ can have the same
colour as $y$. In this case, any leaf from any other subtree can be chosen to be $z_0$.
In the other case, if $y$ and $z$ belong to different subtrees, simply put $z_0 = z$.
One can now see that $m = 0$ and $\bu = \{ x, y, z_0 \}$ satisfy all the stated
conditions.
\par
Now apply Lemma~\ref{l:mainl} to the set $\bu$ (recall that all conservative sets
are stable by Lemma~\ref{l:st}). It follows that $g$ agrees with $f$ on some
relevance hypercube for $\bu$. By our choice of $\bu$, this means that
$g$ has a projection $g'$ of the form specified above. In the tree of $g'$,
the root is adjacent to the leaf labeled with a literal of $x$ and to the other
internal node, whose children are $y, z_0, z_1, \ldots, z_m$. Since $x$ and $y$
have the same colour, it follows from Proposition~\ref{p:lcag} that $z_0$
has the same colour as $x$, which contradicts our choice of $z_0$.
This completes the proof.
\end{proof}

\section{Main Theorem}
\label{s:th}

\begin{theorem*}
    For any read-once function $f$ over $\bl$, $l \ge 3$, depending
    on all its variables, any relevance hypercube set of dimension $l$ for $f$
    constitutes a checking test for $f$.
\end{theorem*}

\begin{proof}
Let $M$ be a relevance hypercube set of dimension $l$ for $f$.
By $g$ denote an alternative read-once function that agrees with $f$
on all vectors from $M$. Suppose that $f$ and $g$ satisfy all the
assumptions made in Section~\ref{s:preproof} and let $T$ be a tree of $f$.
Choose a subset $\bu$ of $T$'s leaves according to the following (non-de\-ter\-mi\-nis\-tic)
recursive rules:
\begin{enumerate}
\item Put $\bu = \bu(v_0)$, where $v_0$ is the root of $T$.
\item If all leaves in $T_v$ have the same colour, then
    $\bu(v) = \{x_i\}$ for some leaf $x_i$ contained in $T_v$.
\item Otherwise:
    \begin{enumerate}
    \item if $v$ is labeled with a prime function $h$, then
        $\bu(v) = \bigcup \bu(v_i)$ over all children $v_i$ of $v$;
    \item otherwise, if $v$ is labeled with a symbol $\circ \in \bconn$, then
        $\bu(v) = \bu'(v) \cup \bu''(v)$, where $\bu'(v) = \bigcup \bu(v_i)$ over all
        multi-coloured subtrees $T_{v_i}$ of $v$, and $\bu''(v) = \bigcup \bu(v_j)$ over
        some subset of all single-coloured subtrees $T_{v_j}$ of $v$ that contains
        one subtree of each colour.
    \end{enumerate}
\end{enumerate}
One can easily see that $\bu$ is conservative and, by Lemma~\ref{l:st}, stable.
By Lemmas~\ref{l:j} and \ref{l:j2}, it contains exactly
$r$ leaves.
It follows from Lemma~\ref{l:mainl} that $g$ agrees with $f$
on some relevance hypercube $H$ for $\bu$.
Since all elements of $\bu$ have different colours, it follows from Proposition~\ref{p:eq} that
the restriction of $g$ to $H$ is similar to $g^0$. On the other
hand, $\bu$ contains at least one leaf from each root subtree of $T$,
so the restriction of $f$ to $H$ has the form
\begin{equation*}
    f' = f^0(f'_1, \ldots, f'_s),
\end{equation*}
where functions $f'_i$ depend on disjoint sets of variables from $\bu$.
These two restrictions are equal, so $f'$ is a prime function,
which is only possible if $r = s$ and $g^0$ is similar to $f^0$.
By Lemma~\ref{l:cs}, sets of root subtrees' variables are the same
for $f$ and $g$.
This means that
\begin{equation*}
    g(\bx) = f^0(g'_1(\bx^1), \ldots, g'_s(\bx^s)).
\end{equation*}
Since a relevance hypercube set for $f$ contains relevance
hypercube sets for all functions $f_1(\bx^1), \ldots, f_s(\bx^s)$
(or their negations) regarded as projections of $f$,
the whole argument can be repeated recursively.
In the end, one sees that $f$ and $g$ can be expressed by
the same formula, and so $f = g$. This concludes the proof.
\end{proof}

\begin{corollary*}
    For any read-once function $f$ over $\bl$ depending on $n$ variables
    it holds that
    \begin{equation*}
        T_{\bl}(f) \le 2^l \cdot \binom{n}{l} = O(n^l).
    \end{equation*}
\end{corollary*}

\section{Discussion}
\label{s:conc}

It is interesting to note that our result gives a stronger form of Kuznetsov's classic theorem on
read-once representations~\cite{kuzn}. The original result can be reformulated
as follows: for any given Boolean function $f$ and any two trees
$T_1$ and $T_2$ representing $f$, there exists a one-to-one correspondence $\phi$
between the sets of internal nodes of $T_1$ and $T_2$ such that the functions
represented by each pair of matching nodes are either equal or each other's negations.
This fact was independently proved by Aaronson~\cite{scaa}, who also developed
an $O(N^{\log_2 3} \log N)$ algorithm for transforming the truth table of $f$ into
such a tree. Note that a sequence of $O(N)$-sized circuits that check the existence of
and output read-once representations over $\bl$ for any fixed $l$ was constructed
in~\cite{aavpmi23}. In these results $N = 2^n$ is the input length.

Now suppose $T_1$ and $T_2$ are trees
representing $n$-variable Boolean functions $f_1$ and $f_2$, and it is known
a priori that these trees do not contain nodes labeled with prime functions of
arity greater than $l$.
Our technique reveals that in order to prove
the existence of a correspondence $\phi$
it is sufficient to verify that $f_1$ and $f_2$ agree on an $O(n^l)$-sized
set of input vectors. While Kuznetsov's theorem does not concern itself
with computational issues, our theorem shows that only a small fraction of input
vectors (in fact, a polynomial number of them, as compared to the total of $2^n$)
is needed to certify the ``similarity'' of the trees.

\subsubsection*{Acknowledgements.}

The author is indebted to Prof. Andrey~A.~Voronenko, who suggested the problem
considered in this paper.
The author also wishes to thank Maksim~A.~Bashov for useful
discussions and the anonymous referees for their advice.
This research has been supported by Russian Presidential grant MD--757.2011.9.

\newpage

\end{document}